\newtheorem{definition}{Definition}
\newtheorem{theorem}{Theorem}
\newtheorem{lemma}{Lemma}
\newtheorem{corollary}{Corollary}
\date{\today}
\begin{document}
\title{Coded Caching with Linear Subpacketization \\  is Possible 
using Ruzsa-Szem\'eredi Graphs}
%

%\author{\IEEEauthorblockN{Karthikeyan Shanmugam}
%\IEEEauthorblockA{IBM Research \\T. J. Watson Center\\
%Yorktown Heights, NY\\
%\texttt{karthikeyanshanmugam88@gmail.com}}
%\and
%\IEEEauthorblockN{Antonia M. Tulino}
%\IEEEauthorblockA{Nokia Bell Labs, Holmdel, NJ\\
%Universit\'a Federico II,  Napoli Italy\\
%\texttt{a.tulino@nokia-bell-labs.com}%\\\texttt{antoniamaria.tulino@unina.it}
%}
%\and
%\IEEEauthorblockN{Alexandros G. Dimakis}
%\IEEEauthorblockA{Department of Electrical and \\ Computer Engineering\\
%University of Texas at Austin\\
% Austin, TX\\
%\texttt{dimakis@austin.utexas.edu }}}

\author[1]{Karthikeyan Shanmugam}
\author[2]{Antonia M. Tulino}
\author[3]{Alexandros G. Dimakis}
\affil[1]{IBM Research, T. J. Watson Research Center \protect \\
\texttt{karthikeyanshanmugam88@gmail.com}}
\affil[2]{Nokia Bell Labs, Holmdel NJ \&
Universit\'a Federico II,  Napoli Italy \protect \\
\texttt{a.tulino@nokia-bell-labs.com}}
\affil[3]{University of Texas at Austin \protect \\
\texttt{dimakis@austin.utexas.edu }}

\maketitle

\begin{abstract} 
	
Coded caching is a problem where encoded broadcasts are used to satisfy users requesting popular files and having caching capabilities. Recent work by Maddah-Ali and Niesen showed that it is possible to satisfy a scaling number of users with only a constant number of broadcast transmissions by exploiting coding and caching. 
Unfortunately, all previous schemes required the splitting of files into an exponential number of packets before the significant coding gains of caching appeared. The question of what can be achieved with polynomial subpacketization (in the number of users) has been a central open problem in this area. 
We resolve this problem and present the first coded caching scheme with polynomial (in fact, linear) subpacketization.
We obtain a number of transmissions that is not constant, but can be any polynomial in the number of users with an exponent arbitrarily close to zero. 
Our central technical tool is a novel connection between Ruzsa-Szem\'eredi graphs and coded caching. 

\end{abstract}

\section{Introduction}

The number and display capabilities of devices connected to the Internet through cellular wireless is rapidly increasing. 
This is leading to a dramatic increase in wireless traffic, predominantly driven by video content demand~\cite{femto1,paschos2016wireless}. System designers fortify 5G wireless networks with infrastructure densification and increased rates, as much as possible, but 
backhaul links become congested as the number of base stations and rates increase~\cite{bastug2014living}.
The idea of Femtocaching~\cite{golrezaei2012femtocaching,femto1} was to \textit{add storage to small-cell base stations} so that they can cache popular content and hence relieve the backhaul links. These links can even be wireless and pre-populate caches during low-traffic times, effectively converting the wireless infrastructure to a content delivery network (CDN) at the edge. 
Caching for wireless networks has been recognized as a disruptive technology for 5th generation (5G) cellular networks~\cite{boccardi2014five,wang2014cache}.

Caching helps in two fundamentally different ways. The first is the obvious one: If we cache popular content at a nearby small-cell 
a user can find it with local communication and cause less interference and no overheads to the macro-cell. 
Resource allocation for this problem is still highly non-trivial: Even for a known, fixed popularity distribution, deciding what to cache where is an NP-hard combinatorial optimization problem. Interestingly this problem can be provably approximated using submodular function optimization~\cite{femto1}. 
There are several other challenging aspects that must be addressed like including device-to-device communication~\cite{golrezaei2013femtocaching}, learning popularity distributions~\cite{blasco2014learning} and mobility~\cite{poularakis2014approximation}, among others. Resource allocation for caching in wireless networks is currently a rich area of active research. 

Caching also helps in a second, surprising way: even when the cached content is not desired nearby it may be desired far
away. This may create broadcast coding opportunities where a macro-cell transmits the XOR of two files 
and each user uses their cache to cancel the non-desired file. 
This caching benefit is unique to wireless channels and is an example of index coding~\cite{bar2011index}, a fundamental problem in information theory. The study of optimal design of caches to exploit this benefit due to coding was initiated by Maddah-Ali and Niesen in their pioneering work~\cite{AliNiesen,maddah2013decentralized}, often called the \textit{coded caching} problem.

In the \textit{Coded caching problem}~\cite{AliNiesen}, there is a library of $N$ files and $K$ users. Each user requests one of the files and has a local cache that can fit the equivalent of $M$ files. 
The problem is to design what each user caches from the library, before seeing what each user requests:
The \textit{placement phase} has to be designed such that, in the subsequent \textit{delivery phase}, \textit{any} set of user demands 
from the library can be satisfied using at most $R$ transmissions. In other words, all the users may decide to request the second file, or different files (out of the $N$ possible) and a base station has to decide what to transmit. Each user's cache now acts as pre-designed side information which can be used to broadcast linear combinations of packets. Coded caching can be seen as index coding with a twist:
the side information sets can be designed, but the user requests are selected subsequently, in an adversarial fashion. 

Since there are $K$ users, each requesting one file, the base station can trivially satisfy everyone using $R=K$ transmissions, even without any use of caching. Furthermore, every user can cache $\frac{M}{N}$ of every file. The base station can then transmit the remaining 
$1- \frac{M}{N}$ of each file, requiring in total  $R= K (1-\frac{M}{N})$ transmissions. 
This is just a conventional use of caching; slightly better than the trivial $R=K$ but not significantly, since it also scales linearly\footnote{Throughout this paper we assume (when we informally write about scaling laws) that $M/N$ is a constant, and $K$ scales to a large number of users. Despite this, our mathematical claims hold more generally, for any scaling relation between $M,N,K$ and we include all the parameters explicitly in our theorems. }
in the number of users $K$. 

The surprising result of Maddah-Ali and Niesen~\cite{AliNiesen} is that using coded caching it is possible to satisfy any request pattern using only $R = \frac{N}{M}$ transmissions, i.e. a \textit{constant} number of transmissions, not scaling in the number of users $K$. 
A number of subsequent works~\cite{ji2015order,ji2014average,niesen2013coded,karamchandani2016hierarchical} have introduced order optimal schemes for variations of the problem and established similar strong performance bounds. 

The central limitation of all previous works is the problem of \textit{exponential subpacketization}: Coded caching requires every file to be separated into $F$ subpackets that are then stored in user caches. 
In the original scheme~\cite{AliNiesen}, the number of packets $F$ was scaling exponentially in the number of users $K$ (for constant $M/N$).  
Maddah-Ali and Niesen~\cite{maddah2013decentralized} also showed that even if the placement is random, i.e. every user caches a random $\frac{M}{N}$ fraction of every file, then the same rate $R \approx \frac{N}{M}$ under adversarial user demands is possible. 
These schemes are called decentralized caching schemes since very little coordination is required among the user caches. 
This analysis required that $F \rightarrow \infty$. Subsequently, \cite{ShanmugamIT} showed that 
for any $F \leq  \exp( K \, M/N)$, a linear number of transmissions $R \geq c K$ would be required for some constant $c$. Therefore, the benefits of coded caching appear only when $F$ scales exponentially in the number of users $K$, making these schemes non-practical.

The problem of subpacketization is a central one in caching, since it was realized in~\cite{ShanmugamIT}.
Very recent work~\cite{tang2016coded,yan2015placement,shangguan2016centralized,yan2016placement}
has investigated better achievability schemes with a smaller number of packets, for the centralized case: 
Tang and Ramamoorthy~\cite{tang2016coded} connected coded caching to resolvable combinatorial designs and showed a scheme where $R \approx \frac{N}{M}$ while improving exponentially over the previous scheme by \cite{AliNiesen}, 
but still requiring exponential subpacketization. A similar result was shown in \cite{yan2015placement} through
a very interesting construction called a ``Placement Delivery Array'' (PDA). Shangguan et al.~\cite{shangguan2016centralized} connected this problem to the construction of some hypergraphs with extremal properties and showed the first sub-exponential (but still intractable) scheme: $F = \exp \left(\sqrt{K} \right)$.
Furthermore, they established that such hypergraphs do not exist when $F= O(K)$ for a constant number of transmissions $R$. 
Yan et al.~\cite{yan2016placement} connected the PDA formulation to another graph problem and derived a centralized caching scheme
with optimal non-scaling $R$, but $F$ scales super-polynomially in the number of users $K$.

\noindent \textbf{Our Contribution:} We present the first coded caching scheme with polynomial (in fact, linear) subpacketization. 
For any arbitrarily small constant $\delta$, we show that there is a small $\epsilon =\epsilon(\delta) >0$ such that we can design a centralized coded caching scheme that satisfies $K$ users with $R= K^{\delta}$ transmissions using only $F=K$ packets.  
This is possible for any constant ratio $\frac{M}{N}$, in fact it suffices if $\frac{M}{N} \geq K^{-\epsilon}$. 

Our central technical tool is a novel connection between Ruzsa-Szem\'eredi graphs~\cite{ruzsa1978triple,alon2012nearly} and coded caching. 
We show that any $(r,t)$ Ruzsa-Szem\'eredi graph on $K$ vertices and $\Theta(K^2)$ edges 
corresponds to a centralized coded caching scheme with $F=K$ and constant (or $o(1)$) $M/N$. Leveraging this, we show that an existing Ruzsa-Szem\'eredi construction from \cite{alon2012nearly} gives an explicit coded caching scheme with $R= K^{\delta}$, $\frac{M}{N} \geq K^{-\epsilon}$ and $F=K$ for any small $0<\delta<1$,  large enough $K$ and with $\epsilon = c_1\delta\exp ( -\frac{c_2}{ \delta})$ for some positive constants $c_1$ and $c_2$.
Note that our results do not violate the impossibility results established in~\cite{shangguan2016centralized}
since they apply for linear file sizes and constant $R,\, M/N$.

We emphasize that the graph constructions we use~\cite{alon2012nearly} require very large $K$ and do not seem directly applicable for real caching systems. In any case, we find it quite surprising that linear packet sizes $F=K$ are possible for near-optimal number of transmissions.
We believe this path can lead to practical coded caching schemes.

\section{Ruzsa-Szem\'eredi Graphs and Coded Caching }

We start with the definition of Ruzsa-Szem\'eredi Graphs and establish the connection to the centralized coded caching problem. 
Some graph theoretic notions are reviewed before that: Let $G(V,E)$ be an undirected graph on the set of vertices $V$ and edge set $E$. $G_{S}$ denotes the induced subgraph on the vertex set $S \subseteq V$. A \textit{matching} $M \subseteq E$ is a subset of edges which are pairwise disjoint (no two edges have a common vertex). 

\begin{definition}
The set of edges $M \subseteq E$ is an induced matching if for the set $S$ of all the vertices incident on the edges in $M$, the induced graph $G_S$ contains no other edges apart from those in the matching $M$. 
\end{definition}
\begin{definition}
 An $(r,t)$- Ruzsa-Szem\'eredi graph is a graph on $n$ vertices such that the edge set $E$ can be \textit{partitioned} into $t$ pairwise edge-disjoint induced-matchings of average size $r$.
\end{definition}
\textbf{Remark:} Note that the usual definition of Ruzsa-Szem\'eredi graph is when all the induced matchings have the exact same size $r$. Here, we adopt a weaker definition as it suffices to draw a connection to the coded caching problem.

 \begin{figure*}
 \centering
 \includegraphics[width=15cm]{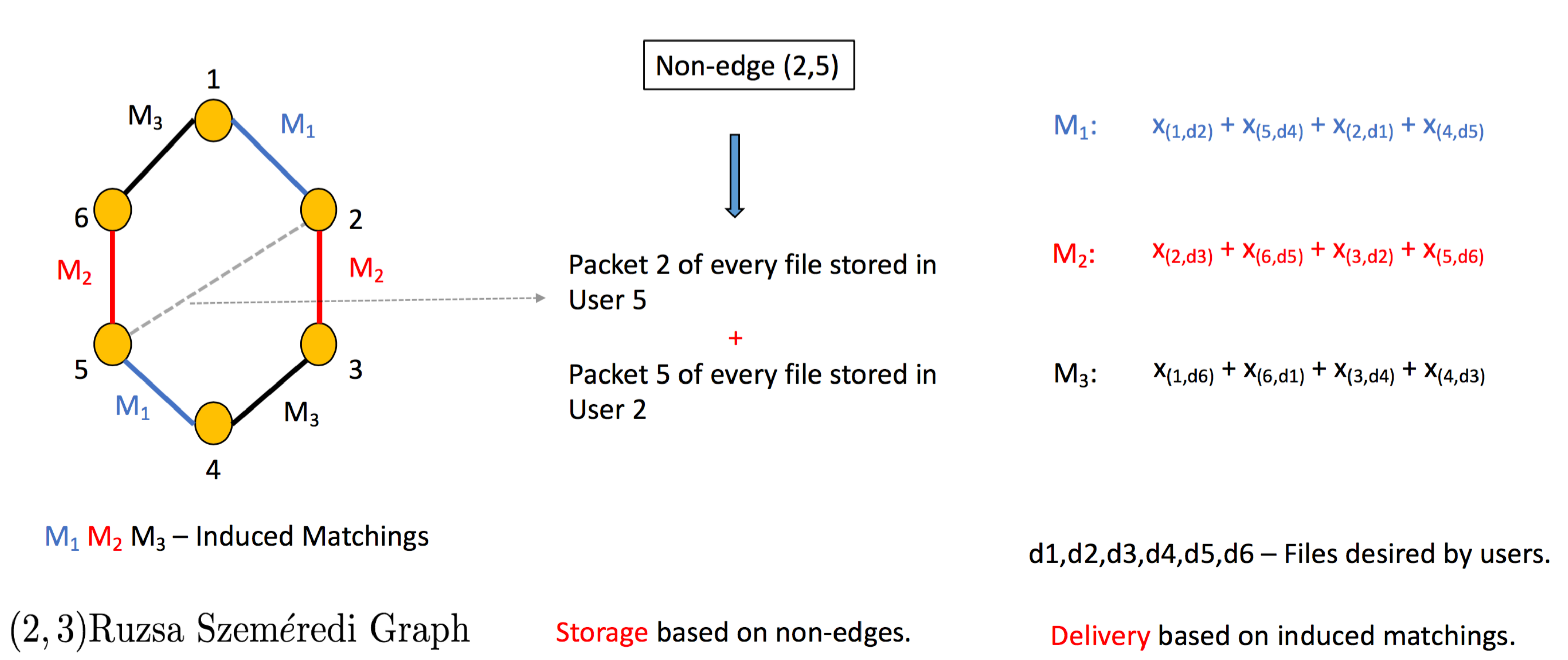}
 \caption{Illustration of the coded caching scheme described in Definition \ref{ruzsascheme} using a $(2,3)$ Ruzsa-Szem\'eredi graph on $6$ vertices. Every non-edge represents a storage action in the placement phase. Every induced matching corresponds to an XOR in the delivery phase. User $1$ recovers the packet $(2,d_1)$ since it has packets $1,5$ and $4$ of every file as $(1,1)$ (a self-loop non- edge), $(1,5)$ and $(1,4)$ are non-edges in the graph. }
 \label{fig:illustration}
 \end{figure*}

\hspace{1cm}

To review the coded caching problem: there is a noiseless broadcast channel between a server that has access to a library of $N$ files and $K$ users each of whom is equipped with a local cache of size $M$. The problem is to design the \textit{placement phase} where the user cache is filled by file packets from the library subject to the cache constraint and the \textit{delivery phase} where any demand pattern of the $K$ users arising from the library is satisfied with at most $R$ file transmissions.

Let $\mathbf{x}_i \in \mathbb{F}^{F \times 1}$ denote a vector of length $F$ over the field $\mathbb{F}$ presenting the $i$-th file in the library. Based on this, we define a $(R,K,M,N,F)$ coded caching scheme as follows:

\begin{definition}
  Let every file in the library contain $F$ packets. An (R,K,M,N,F) coded caching scheme consists of:
  \begin{enumerate}
   \item A family of subsets $\{S_{p,f}\}_{p \in [1:N], f \in [1:K]}$ where $S_{p,f} \subseteq [1:K]$ is the set of user caches where the packet $p$ of file $f$ is stored.  The placement is such that every user cache contains at most $MF$ file packets.
    \item For any set of user demands arising from the library of $N$ files, i.e. $\{ d_1,d_2 \ldots d_K\} \in [1:N]^{K}$, we define the transmission function $\phi \left(\mathbf{x}_{d_1}, \ldots \mathbf{x}_{d_K} \right) \rightarrow \mathbb{F}^{RF \times 1}$ for some field $\mathbb{F}$ such that every user can decode their demanded packets $\mathbf{x}_{d_k}$ using $\phi(\cdot)$ and the cache content available.
    \item The number of file transmissions is at most $R$ for any demand pattern among the users arising from the library.
    \end{enumerate}
  \end{definition}
  \textbf{Remark:} For some specific demand patterns, $RF$ transmissions may not be needed. However, we zero pad the transmissions to get $RF$ packets over $\mathbb{F}$ for notational convenience.
  
We can now establish the connection between Rusza-Szem\'eredi Graphs and Coded Caching Schemes. This is inspired by the Placement Delivery Array formulation of coded caching from~\cite{yan2015placement,shangguan2016centralized,yan2016placement}. Although there is a connection between square symmetric PDA and a Rusza-Szemeredi graph with appropriate parameters, we will state the connection to coded caching and prove it in a more direct fashion. 
 
We call the following coded caching scheme a Ruzsa-Szem\'eredi scheme for any $(r,t)$- Ruzsa-Szemeredi graph on $K$ vertices. 
  
\begin{definition}\label{ruzsascheme}
  (\textbf{Ruzsa-Szem\'eredi Coded Caching Scheme}) Let $G(V,E)$ be an $(r,t)$ Ruzsa-Szem\'eredi graph on $K$ vertices with $t$ pairwise edge disjoint induced matchings $M_1,M_2 \ldots M_t$ with average size 
 \[
 r = \frac{1}{t} \sum \limits_{i=1}^t \lvert M_i \rvert = \frac{1}{t} \lvert E \rvert,
 \] 
and minimum degree at least 
\[
(1 - \frac{M}{N}) \,K.
\] 
We now describe a placement and a feasible delivery scheme guided by the structure of the graph called the $(K,t,r, \frac{M}{N})$ Ruzsa-Szem\'eredi Scheme.
 \begin{enumerate}
  \item Let us consider an edge $\{i,j\} \notin E$. Every file in the coded caching scheme is split into $F=K$ packets. Then, we store the $i$-th packet of \textit{all} files in the library in the cache of user $j$ and we store the $j$-th packet of \textit{all} files in the library in the cache of user $i$.   
 \item  The delivery is based on the induced matchings in $G$. The edge-set of $G$ partitions into $t$ induced matchings of average size $r$. For any $1 \leq i \leq t$, consider the matching $M_i$ of size $r_i$. Let the matching consist of the $r_i$ edges 
\[
\left(p_1,p_2 \right), \left(p_3,p_4 \right) \ldots \left( p_{2r_i-1,2r_i} \right).
\]
We XOR the $2r_i$ packets:
\[
\{ (p_{2k-1},d_{p_{2k}}), (p_{2k},d_{p_{2k-1}}) \}_{1 \leq k \leq r_i}.
\]
The delivery scheme consists of $t$ packet transmissions and hence $t/F$ file transmissions.
  \end{enumerate}
  \end{definition}

We present an illustration of the coded caching scheme in the above definition through an example in Figure \ref{fig:illustration}. Figure \ref{fig:illustration} is based on a simple $(2,3)$ Ruzsa-Szem\'eredi graph on $6$ vertices. Every non-edge represents a storage action in the placement phase. Every induced matching corresponds to an XOR in the delivery phase.

 \begin{theorem}\label{thmcoded}
 Any $(K,t,r,\frac{M}{N})$ Rusza-Szemeredi Scheme is an $(R= \frac{t}{K},K,M,N,F=K)$ coded caching scheme.
 \end{theorem}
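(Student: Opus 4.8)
The plan is to verify, one at a time, the three conditions in the definition of an $(R,K,M,N,F)$ coded caching scheme: the cache-size bound, decodability of an arbitrary demand vector, and the transmission count.

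First I would check the cache constraint. Fix a user $j$. By the placement rule, $j$ stores the $i$-th packet of every file exactly for those indices $i$ with $\{i,j\} \notin E$, where the self-loop $\{j,j\}$ is counted as a non-edge. The number of such indices is $K - \deg(j)$, and each contributes one packet per file, i.e. $N$ packets, so user $j$ stores $N\,(K-\deg(j))$ packets in total. Invoking the minimum-degree hypothesis $\deg(j) \geq (1-\frac{M}{N})K$ gives $N\,(K-\deg(j)) \leq N\cdot \frac{M}{N} K = MK = MF$, so the cache bound $MF$ holds for every user.

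Second, and this is the heart of the argument, I would establish decodability. Fix a user $a$ with demand $d_a$; it must recover all $K$ packets $(1,d_a),\ldots,(K,d_a)$ of file $d_a$. Every packet $(i,d_a)$ with $\{i,a\}\notin E$ is already cached by $a$, so only the packets $(b,d_a)$ for which $b$ is a neighbor of $a$ remain to be delivered. Each such edge $\{a,b\}$ lies in a unique matching $M_\ell$, since the $t$ matchings partition $E$, and the XOR transmitted for $M_\ell$ contains the desired term $(b,d_a)$. To extract it, $a$ must know every other summand of this XOR. The summand from the same edge is $(a,d_b)$, which $a$ possesses via the self-loop non-edge $\{a,a\}$. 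For any other edge $\{p,q\}\in M_\ell$ the XOR contributes $(p,d_q)$ and $(q,d_p)$; here I would invoke the induced-matching property: the vertices $a,b,p,q$ all lie in the vertex set $S$ spanned by $M_\ell$, and the only edges of $G_S$ are the matching edges, so $\{a,p\}$ and $\{a,q\}$ are non-edges. Hence $a$ has cached packets $p$ and $q$ of every file and knows both interfering terms $(p,d_q)$ and $(q,d_p)$. Therefore $a$ cancels all interference and recovers $(b,d_a)$; ranging over all neighbors $b$ recovers the entire file $d_a$, and the same argument applies to every user.

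Finally, the transmission count is immediate: there are $t$ matchings, each producing a single XORed packet, for a total of $t$ transmitted packets; since $F=K$, this amounts to $t/K$ file transmissions, giving $R = t/K$. The main obstacle is the decoding step, specifically making precise that the induced-matching condition is exactly what guarantees that every packet appearing alongside a user's wanted packet in a single XOR is one the user has already stored; equivalently, each transmission is simultaneously useful to all $2r_i$ endpoints of its matching. Once the placement and delivery are read off the graph, the cache-bound and counting steps are routine.
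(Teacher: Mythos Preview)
Your proposal is correct and follows essentially the same approach as the paper: both verify the cache bound via the minimum-degree hypothesis, use the induced-matching property to argue that every vertex incident to a matching has cached all but its one desired packet in the corresponding XOR, and count $t$ XORs for $R=t/K$. The only cosmetic difference is that you organize the decoding argument user-by-user (fixing $a$ and ranging over its neighbors), whereas the paper organizes it matching-by-matching with an odd/even case split on the endpoint index; the underlying use of the induced-matching condition is identical.
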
 
 \begin{proof}
 The number of users $K$ is equal to the number of packets per file $F$ in the construction, i.e. $K=F$. We first check if the memory constraint is satisfied. For a specific user $j$, the number of packets per file stored at the user is $K-d_j$ where $d_j$ is the degree of vertex $j$. Therefore, the total number of file packets stored is
 \[
  (K-d_i) N \leq \frac{M}{N}K N = MK = MF. \]
  
Therefore, the cache memory constraint of every user is satisfied. Further, the number of file transmissions ($R$) is seen to be $t/F$.

We now show that the scheme above satisfies any set of user demands. Consider the matching $M_i$ consisting of edges $r_i$ edges 
\[ \left(p_1,p_2 \right), \left(p_3,p_4 \right) \ldots \left( p_{2r_i-1,2r_i} \right), \]
and the XOR transmission corresponding to this matching consisting of $2r_i$ packets:
\[ \{ (p_{2k-1},d_{p_{2k}}), (p_{2k},d_{p_{2k-1}}) \}_{1 \leq k \leq r_i}. \] 

First, we show that, from the XOR consisting of the packets mentioned above, every user in 
 $\{p_1,p_2 \ldots p_{2r_i} \}$ 
recovers at least one packet used in the XOR that they desired. 
This decoding is done at every user by just using this XOR and the cache content. It is enough to show that every user $ k \in \{ p_1,p_2 \ldots p_{2r_i} \}$,
has all but $1$ packet from the $2r_i$ packets used in the XOR in its cache. We consider the following two cases:
  
  \begin{enumerate}
   \item User $p_{q}: q \in \{ 1,2 \ldots 2r_i \} $, $q$ is odd: Then, the user desires the packet $(p_{q+1},d_{p_q})$ from the XOR formed from the matching $M_i$. Suppose, user $p_q$ does not possess packet $p_m$ of some file for $m \neq q+1,~ m \in \{1, \ldots 2r_i \}$. Then, it means that $\{p_m,p_q\}$ is an edge in $E$ as according to the placement described, as only non edges correspond to placement actions. This is impossible since vertices $p_m,p_q$ belong to the induced matching $M_i$ and there is no edge between $p_m$ and $p_q$ for $m \neq q+1$ (Note that there is no self-loop in the graph). Hence, user $p_q$ possesses packet $p_m$ belonging to all files where $m \neq q+1,~m \in \{1,2 \ldots 2r_i\}$.
   
   \item User $p_{q}: q \in \{ 1,2 \ldots 2r_i \} $, $q$ is even. A similar argument holds but we will give the details for the sake of completeness. The user desires the packet $(p_{q-1},d_{p_q})$ from the XOR formed from matching $M_i$. Suppose, user $p_q$ does not possess packet $p_m$ of some file for $m \neq q-1,~ m \in \{1, \ldots 2r_i \}$. Then, it means that $\{p_m,p_q\}$ is an edge in $E$ as according to the placement described, as only non edges correspond to placement actions. This is impossible since vertices $p_m,p_q$ belong to the induced matching $M_i$ and there is no edge between $p_m$ and $p_q$ for $m \neq q-1$. Hence, user $p_q$ possesses packet $p_m$ belonging to all files where $m \neq q-1,~ m \in \{1,2 \ldots 2r_i\}$.
  \end{enumerate}
   
   Now, we argue that all demanded packets are delivered to their respective users using the matching based delivery scheme. Consider a packet $(f,d_k),~1 \leq f \leq K=F,~1 \leq k \leq K$ demanded by user $k$. Suppose this packet was not cached by user $k$, then $(f,k)$ is an edge in the graph and because the edge set partitions into induced matchings, it belongs to some induced matching $M_q$, $1 \leq q \leq t$. Therefore, the $q$-th XOR in the above delivery scheme would enable user $k$ to decode the packet $(f,d_k)$. This completes the proof.
    
 \end{proof}
  
  In what follows, we consider a known construction of a dense $(r,t)$-Ruzsa-Szem\'eredi graph that decomposes into almost near linear number of induced matchings and indicate how it gives rise to coded caching schemes when the file size is linear in the number of users using the connection described above.
  
  \subsection{The Ruzsa-Szem\'eredi Construction by Alon, Moitra and Sudakov}
    In this section, we describe a coded caching scheme with almost constant number of transmissions ($R$ is almost a constant) and file size $F$ required is linear in $K$ even for a sub-constant user cache to library size ratio ($\frac{M}{N}$) We describe a Ruzsa-Szem\'eredi graph from \cite{alon2012nearly}. 
    
  \begin{definition}\label{Ruzsa-Alon}
   Define a graph $G(V,E)$ as follows:
    $V=[C]^n$ where $C$ is a positive integer. $|V|=K=C^n$. $n$ is even, $n \geq 2C$. Suppose $x$ and $y$ are sampled uniformly from $V$, let $\mu = \mathbb{E}_{x,y} \left[ \lVert x-y \rVert_2^2\right]$. Consider a pair of ordered tuples $u,v \in [C]^n$. Now, $(u,v) \in E$ if and only if $ \lVert \lVert u-v \rVert_2^2 - \mu \rVert< n$.
    \end{definition}
      
  \begin{theorem}\label{thmalon}
      \cite{alon2012nearly} The graph $G$ defined in Definition \ref{Ruzsa-Alon} is an $(r,t)$-Ruzsa-Szem\'eredi graph on $K=C^n$ vertices where the edge set partitions into $t=K^{f}$ edge disjoint induced matchings and it has at most $K^g$ edges missing such that 
\[ 
 g=2-\frac{1}{2C^4 \mathrm{ln} C}+o(1), 
 \]
 and 
 \[ f=1+ 2 \frac{\mathrm{ln} 10.5}{\mathrm{ln} C}+o(1).
 \]
  \end{theorem}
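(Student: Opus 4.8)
The plan is to prove the two assertions of Theorem~\ref{thmalon} separately, since they are essentially independent: the \emph{density} claim (at most $K^g$ missing edges) is a concentration estimate about a single random pair of vertices, whereas the \emph{decomposition} claim (edge set splits into $t=K^f$ induced matchings) is a combinatorial--geometric construction. Throughout I use that $u,v$ drawn uniformly from $V=[C]^n$ have independent coordinates, so that $\lVert u-v\rVert_2^2=\sum_{i=1}^n (u_i-v_i)^2$ is a sum of $n$ i.i.d.\ terms, each taking values in an interval of length $\Theta(C^2)$ with mean $\mu/n$.

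For the density exponent $g$, a pair $\{u,v\}$ is a non-edge exactly when $\lvert\,\lVert u-v\rVert_2^2-\mu\,\rvert\ge n$, i.e.\ when the empirical mean $\tfrac1n\lVert u-v\rVert_2^2$ deviates from $\mu/n$ by at least $1$. Since this per-coordinate deviation is $\Theta(C^{-2})$ relative to the scale $C^2$ of a single term, the event lies in the near-Gaussian regime, and a Hoeffding/Bernstein-type bound on a sum of $n$ bounded independent variables of range $\Theta(C^2)$ gives a non-edge probability $\exp\!\big(-\Theta(n/C^4)\big)$; carrying the constants of the specific estimate through yields rate $\tfrac{1}{2C^4}$. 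Multiplying by the $K^2$ ordered pairs and substituting $K=C^n$, so that $e^{-n/(2C^4)}=K^{-1/(2C^4\ln C)}$, produces a missing-edge count of $K^{\,2-\frac{1}{2C^4\ln C}+o(1)}=K^g$. The only care needed here is to pin down the constant in front of $1/C^4$ by committing to one concentration inequality and to control the $o(1)$ corrections.

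For the decomposition I would build the induced matchings in the Behrend/progression-free style dictated by the geometry. First group the edges by their difference vector $d=u-v$; each difference class is a near-perfect matching, but it is far from induced, because by the same concentration almost every difference $u-u'$ lands back in the band $(\mu-n,\mu+n)$. I would therefore refine each difference class by restricting its set of left endpoints to a highly structured family — translates of an arithmetic progression whose step is aligned with $d$ and whose points are pinned to a common sphere. The payoff is that inducedness reduces to a convexity statement: for two pairs $(u,u+d)$ and $(u',u'+d)$ of a refined class, the four cross distances are $\lVert u-u'\rVert_2^2$ and $\lVert u-u'\pm d\rVert_2^2$, and the progression-free, strictly-convex-sphere structure forces each of these outside the band, since a surviving cross-edge would exhibit a forbidden averaged/collinear configuration among points constrained to a sphere. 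After checking that the refined classes are genuine matchings, are pairwise edge-disjoint, and cover $E$, counting them — the number of admissible differences times the number of progression cosets — yields $t=K^{\,1+2\ln(10.5)/\ln C+o(1)}=K^f$, the numerical constant $10.5$ arising from optimizing the progression length and the band width in the construction.

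The main obstacle is exactly this induced-matching construction and its verification: unlike the density bound, which is routine concentration, one must refine the difference classes delicately enough that \emph{every} cross distance is pushed out of the band, while still covering all edges and keeping the number of classes down to $K^{1+o_C(1)}$. The underlying tension is that Definition~\ref{Ruzsa-Alon} fixes the band half-width at $n$: a wider band makes the graph denser and improves $g$, but leaves less geometric room for the convexity argument that guarantees inducedness and hence worsens $f$. Balancing these two competing effects is precisely what determines the constants appearing in the two exponents, and making that trade-off explicit is the heart of the proof.
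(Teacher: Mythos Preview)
The paper does not prove Theorem~\ref{thmalon}; it is quoted verbatim from \cite{alon2012nearly} and used as a black box. The only piece the present paper re-derives is the per-vertex degree bound (Lemma~\ref{lem:degree}), and your density argument for $g$ via Hoeffding on the i.i.d.\ coordinate contributions is exactly that computation, so that half of your proposal matches what little the paper actually does.

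For the decomposition into induced matchings, there is nothing in the paper to compare against, but your sketch diverges from the actual Alon--Moitra--Sudakov argument in a way worth flagging. You propose to group edges by their \emph{difference} $d=u-v$ and then refine each difference class by restricting endpoints to translates of a progression on a sphere. The construction in \cite{alon2012nearly} instead groups edges by their \emph{sum} $w=u+v$ together with the value $\lVert u\rVert_2^2$ (equivalently $\lVert v\rVert_2^2$, since $\lVert u\rVert_2^2+\lVert v\rVert_2^2$ is determined by $w$ and $\lVert u-v\rVert_2^2$ up to the band width). Inducedness then follows from the parallelogram identity: if $(u,v)$ and $(u',v')$ share the same $w$ and the same $\lVert u\rVert_2^2$, then for any cross pair one computes $\lVert u-v'\rVert_2^2$ directly and sees it is forced far from $\mu$. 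No Behrend/progression-free ingredient is needed, and the count of matchings is simply the number of pairs $(w,s)$, which gives the exponent $f$ with the constant $10.5$ after bounding the number of admissible $s$ values. Your difference-class-plus-refinement route might be made to work, but as written it carries an unnecessary layer (the progression-free refinement) and the inducedness claim ``the convex-sphere structure forces each cross distance outside the band'' is not yet an argument; the sum-based grouping makes that step a two-line identity rather than a structural lemma.
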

  
  We make the following simple observation to make it easier to connect it to the coded caching scheme.
    \begin{lemma}\label{lem:degree}
     Consider any vertex $x \in V=[C]^n$ in $G$ as in Definition \ref{Ruzsa-Alon}. Let $K=C^n$. The degree of $x$ is:
      \[ \lvert \{y \in V: (x,y) \in E \} \rvert \geq K \left(1-2K^{- \frac{1}{2C^4 \mathrm{ln} C}} \right). \]
    \end{lemma}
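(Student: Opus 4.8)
The plan is to fix the vertex $x$ and read off its degree as $K$ times the probability that a uniformly random $y \in [C]^n$ is a neighbour, i.e. that $\lVert x-y\rVert_2^2$ lands in the window $(\mu-n,\mu+n)$. Equivalently, I will upper bound the \emph{non}-neighbour probability $\Pr_y\big[\,\lvert \lVert x-y\rVert_2^2 - \mu\rvert \ge n\,\big]$ and subtract $K$ times it from $K$. As a preliminary step I would record the value of $\mu$: since the coordinates of $x$ and $y$ are independent and uniform on $[C]$, we have $\mu = \mathbb{E}_{x,y}\lVert x-y\rVert_2^2 = n\,\mathbb{E}[(x_1-y_1)^2] = n\,\tfrac{C^2-1}{6}$, a quantity linear in $n$.

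The key structural observation is that, for \emph{fixed} $x$ and random $y$, the quantity $\lVert x-y\rVert_2^2 = \sum_{i=1}^n (x_i-y_i)^2$ is a sum of $n$ \emph{independent} random variables $Z_i := (x_i-y_i)^2$, each supported in the bounded interval $[0,(C-1)^2]$. This is exactly the structure needed for a Hoeffding-type concentration bound. Applying Hoeffding's inequality to $\sum_i Z_i$ about its mean, with deviation $n$, gives
\[
\Pr_y\!\left[\,\Big\lvert \textstyle\sum_{i} Z_i - \mathbb{E}_y\!\sum_i Z_i \Big\rvert \ge n\,\right] \;\le\; 2\exp\!\left(-\frac{2n^2}{\,n\,(C-1)^4\,}\right) = 2\exp\!\left(-\frac{2n}{(C-1)^4}\right).
\]

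Finally I would translate the exponent into a power of $K$. Since $K=C^n$ we have $n=\ln K/\ln C$, so the right-hand side equals $2K^{-2/((C-1)^4\ln C)}$, and because $4C^4 > (C-1)^4$ this is at most $2K^{-1/(2C^4\ln C)}$. Thus the number of non-neighbours is at most $2K^{\,1-1/(2C^4\ln C)}$, which yields precisely the claimed lower bound $K\big(1-2K^{-1/(2C^4\ln C)}\big)$ on the degree; notice that the computation lands exactly on the target exponent, which is reassuring that this is the intended route.

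The one delicate point — and the step I expect to be the real obstacle — is that Hoeffding concentrates $\lVert x-y\rVert_2^2$ about its \emph{conditional} mean $\mathbb{E}_y\lVert x-y\rVert_2^2 = n\,\tfrac{C^2-1}{12} + \sum_i (x_i-\bar c)^2$ (with $\bar c = \mathbb{E}[y_1]$ the coordinate mean), whereas the edge window is centred at the \emph{global} mean $\mu$. To make the argument valid for \emph{every} $x$ I must control the gap $\big\lvert \mathbb{E}_y\lVert x-y\rVert_2^2 - \mu\big\rvert$ uniformly in $x$ and verify that the surplus in the Hoeffding exponent absorbs it; this is where the balance of the coordinates of $x$ and the regime $n \ge 2C$ must be brought to bear, and it is the part of the proof I would treat most carefully rather than as routine.
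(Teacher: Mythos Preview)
Your approach is exactly the paper's: fix $x$, write the degree as $K\Pr_y[(x,y)\in E]$, and bound the non-neighbour probability via Hoeffding applied to $\sum_i (x_i-y_i)^2$. The paper's proof in fact does \emph{less} than you propose: it simply asserts that the summands are i.i.d.\ (which is false for fixed $x$; they are merely independent) and applies Hoeffding centred directly at $\mu$, without ever discussing the discrepancy between $\mu$ and the conditional mean $\mathbb{E}_y\lVert x-y\rVert_2^2$. So you have correctly put your finger on the one non-routine step, which the paper glosses over.

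Unfortunately the gap you flag is not absorbable; it is fatal to the lemma as stated. Using your own formula, $\mathbb{E}_y\lVert x-y\rVert_2^2 - \mu = \sum_i (x_i-\bar c)^2 - n\sigma^2$ with $\sigma^2=(C^2-1)/12$. For the corner vertex $x=(1,\dots,1)$ this equals $n(C-1)(C-2)/6$, which already for $C=5$ is $2n$, placing the conditional mean \emph{outside} the edge window $(\mu-n,\mu+n)$. Hoeffding run in the opposite direction then shows that the corner vertex has at most $K\exp(-\Omega(n/C^4))$ neighbours, far below the claimed $K(1-2K^{-1/(2C^4\ln C)})$. The condition $n\ge 2C$ does not help here: the obstruction is of order $nC^2$, not $o(n)$. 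What \emph{is} true is that all but an $\exp(-\Omega(n/C^4))$ fraction of vertices $x$ satisfy $\bigl|\sum_i(x_i-\bar c)^2-n\sigma^2\bigr|<n/2$ (another Hoeffding, now in $x$), and for those $x$ your argument goes through cleanly; so the downstream caching construction can be salvaged by discarding a negligible set of vertices, but the minimum-degree claim for \emph{every} $x$ cannot.
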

    \begin{proof}
       The proof is identical to Claim $2.1$ in \cite{alon2012nearly} except that this is for neighbors of a single vertex. Consider $y$ sampled uniformly randomly from $V$. Then $(x_i -y_i)^2$ is a bounded random variable with support $[0,C^2]$ and i.i.d with respect to the index $i$. Therefore applying Hoeffding's inequality, we have:
          \begin{align}\label{eqn:degreebnd}
           \lvert \{y \in V: (x,y) \notin E \} \rvert & \leq K *\mathrm{Pr}_{y} \left( \lvert \lVert x-y \rVert_2^2 -\mu  \rvert > n \right) \nonumber \\
            & \leq 2K\exp \left( - \frac{n}{2C^4} \right) = 2 K^{1- \frac{1}{2C^4 \mathrm{ln} C}}. 
           \end{align}
       The degree of vertex $x$ is $K - \lvert \{y \in V: (x,y) \notin E \} \rvert$. Therefore, (\ref{eqn:degreebnd}) implies the result.  
    \end{proof}
   
     \begin{theorem}
        With the construction of \cite{alon2012nearly} given in Definition \ref{Ruzsa-Alon}, the corresponding Ruzsa-Szem\'eredi scheme according to Definition \ref{ruzsascheme} yields a
    $(R= K^{ 2 \frac{\mathrm{ln} 10.5}{\mathrm{ln} C}+o(1)},K,M,N,F=K)$ coded caching scheme whenever $ \frac{M}{N} \geq 2K^{- \frac{1}{2C^4 \mathrm{ln} C}} $ where $K = C^n$, $n$ even and $ n \geq 2C$.  
     \end{theorem}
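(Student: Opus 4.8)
The plan is to combine the three preceding results --- Theorem~\ref{thmcoded}, Theorem~\ref{thmalon}, and Lemma~\ref{lem:degree} --- essentially by substitution, since all of the substantive work has already been carried out. First I would invoke Theorem~\ref{thmalon} to assert that the graph $G$ of Definition~\ref{Ruzsa-Alon} is a genuine $(r,t)$-Ruzsa-Szem\'eredi graph on $K = C^n$ vertices whose edge set decomposes into $t = K^f$ pairwise edge-disjoint induced matchings, with $f = 1 + 2\frac{\mathrm{ln} 10.5}{\mathrm{ln} C} + o(1)$. This makes $G$ an admissible input to the Ruzsa-Szem\'eredi coded caching scheme of Definition~\ref{ruzsascheme}, provided one additional hypothesis --- the minimum-degree condition --- is verified.

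The key verification is exactly that minimum-degree requirement. Definition~\ref{ruzsascheme} demands that every vertex have degree at least $(1 - \frac{M}{N})K$, which is precisely the condition that guarantees (via the counting in the proof of Theorem~\ref{thmcoded}) that each user stores at most $MF$ file packets. Lemma~\ref{lem:degree} supplies a uniform lower bound on the degree of every vertex $x$, namely $d_x \geq K\left(1 - 2K^{-\frac{1}{2C^4 \mathrm{ln} C}}\right)$. Under the standing hypothesis $\frac{M}{N} \geq 2K^{-\frac{1}{2C^4 \mathrm{ln} C}}$, I would compare the two quantities directly: $(1 - \frac{M}{N})K \leq \left(1 - 2K^{-\frac{1}{2C^4 \mathrm{ln} C}}\right)K \leq d_x$ for every $x$, so the minimum-degree condition holds and $G$ yields a valid $(K, t, r, \frac{M}{N})$ Ruzsa-Szem\'eredi scheme.

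With admissibility established, I would apply Theorem~\ref{thmcoded}, which certifies that this scheme is an $(R = t/K,\, K,\, M,\, N,\, F = K)$ coded caching scheme. It then only remains to evaluate $R$. Substituting $t = K^f$ gives $R = K^f / K = K^{f-1}$, and plugging in $f - 1 = 2\frac{\mathrm{ln} 10.5}{\mathrm{ln} C} + o(1)$ from Theorem~\ref{thmalon} yields the claimed $R = K^{2\frac{\mathrm{ln} 10.5}{\mathrm{ln} C} + o(1)}$, with $F = K$ holding by construction.

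There is no genuine obstacle here: the statement is a packaging of the earlier results, and the only points requiring care are the consistent bookkeeping of the exponents and the correct direction of the degree inequality. The one thing I would double-check is that the $o(1)$ terms --- appearing in the exponent $f$ of Theorem~\ref{thmalon}, and governing the asymptotic regime $n$ even, $n \geq 2C$, $K \to \infty$ in which the Alon--Moitra--Sudakov bounds are valid --- are invoked self-consistently, so that the stated exponent of $R$ and the memory threshold $\frac{M}{N} \geq 2K^{-\frac{1}{2C^4 \mathrm{ln} C}}$ refer to one and the same asymptotic regime.
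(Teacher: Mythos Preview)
Your proposal is correct and follows essentially the same approach as the paper: invoke Theorem~\ref{thmalon} for the induced-matching decomposition and the value of $t$, use Lemma~\ref{lem:degree} to verify the minimum-degree hypothesis of Definition~\ref{ruzsascheme} under the assumption $\frac{M}{N} \geq 2K^{-\frac{1}{2C^4 \ln C}}$, and then apply Theorem~\ref{thmcoded} to read off $R = t/K = K^{2\frac{\ln 10.5}{\ln C}+o(1)}$. If anything, your write-up is a bit more explicit than the paper's about checking the direction of the degree inequality and about the arithmetic $R = K^{f-1}$.
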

     \begin{proof}
       Theorem \ref{thmalon} shows that the Ruzsa-Szem\'eredi graph construction in Definition \ref{Ruzsa-Alon} has the following properties: 
      \begin{enumerate} 
       \item It is on $K$ vertices such that $K=C^n, n \geq 2C$ 
        \item  Number of edge disjoint induced matchings is $t= K^{1+ 2 \frac{\mathrm{ln} 10.5}{\mathrm{ln} C}+o(1)}$ 
        \item  Average size of each matching is at least $r=K^{2}\left(1-K^{-\frac{1}{2C^4 \mathrm{ln} C}+o(1)} \right)/t$. 
       \end{enumerate} 
        Further, Lemma \ref{lem:degree} guarantees that the minimum degree of this construction is at least $K (1- 2K^{- \frac{1}{2C^4 \mathrm{ln} C}})$. By Definition \ref{ruzsascheme}, this yields a $\left(K, t= K^{1+ 2 \frac{\mathrm{ln} 10.5}{\mathrm{ln} C}+o(1)},r=K^{2}\left(1-K^{-\frac{1}{2C^4 \mathrm{ln} C}+o(1)} \right)/t ,\frac{M}{N}\right)$ Ruzsa-Szem\'eredi scheme whenever $\frac{M}{N} \geq 2K^{- \frac{1}{2C^4 \mathrm{ln} C}} $ . By Theorem \ref{thmcoded}, this implies a feasible coded caching scheme with parameters stated in the theorem. This completes the proof.
     \end{proof}
    
   We restate the result in a different way in the following corollary: 
      \begin{corollary}
     For any $0< \delta < 1$, the construction of \cite{alon2012nearly} given in Definition \ref{Ruzsa-Alon} gives a $(R= K^{ \delta}, K, M, N, F=K)$ coded caching scheme for $K \geq K \left( \delta \right)$ (large enough $K$)  with $ \frac{M}{N}  \geq 2K^{- c_1\delta \exp\left( -\frac{c_2}{\delta}\right)}$ where $c_1,c'_2$ are universal positive constants.
   \end{corollary}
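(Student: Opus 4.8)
The plan is to specialize the preceding theorem by choosing the alphabet size $C$ as an explicit function of the target exponent $\delta$, and then to read off the resulting cache-ratio threshold. That theorem produces, for each admissible $C$, a scheme with transmission exponent $2\frac{\ln 10.5}{\ln C}+o(1)$ and cache requirement $\frac{M}{N}\geq 2K^{-1/(2C^4\ln C)}$. The transmission exponent is decreasing in $C$, while the cache exponent $\frac{1}{2C^4\ln C}$ is also decreasing in $C$; the entire content of the corollary is the explicit trade-off obtained by equating the transmission exponent to a constant fraction of $\delta$ and substituting back.

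First I would fix $C$ to be the smallest integer with $\ln C \geq \frac{4\ln 10.5}{\delta}$, equivalently $C\asymp 10.5^{4/\delta}$, so that the leading transmission exponent satisfies $2\frac{\ln 10.5}{\ln C}\leq \frac{\delta}{2}$. Since $C$ now depends only on $\delta$ and not on $K$, the $o(1)$ term in the preceding theorem tends to $0$ as $K\to\infty$ along the admissible values $K=C^n$ with $n$ even and $n\geq 2C$. Hence there is a threshold $K(\delta)$ so that for all such $K\geq K(\delta)$ the full transmission exponent is at most $\frac{\delta}{2}+o(1)\leq \delta$, giving $R\leq K^{\delta}$ (and we pad to $R=K^{\delta}$).

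Next I would substitute this choice of $C$ into the cache exponent. Writing $\ln C=\Theta(1/\delta)$ and $C^4=\exp(\Theta(1/\delta))$, the quantity $\frac{1}{2C^4\ln C}$ becomes $\Theta\!\big(\delta\, e^{-c_2/\delta}\big)$. The crucial point is that, having chosen $C$ as small as possible, $C^4\ln C$ is bounded \emph{above} (the ceiling contributes at most a constant factor), so $\frac{1}{2C^4\ln C}$ is bounded \emph{below} by $c_1\,\delta\,e^{-c_2/\delta}$ for universal constants $c_1,c_2$ depending only on the numerical constant $10.5$. Because a larger cache exponent corresponds to a weaker (smaller) threshold on $\frac{M}{N}$, this lower bound yields $2K^{-1/(2C^4\ln C)}\leq 2K^{-c_1\delta e^{-c_2/\delta}}$, so any $\frac{M}{N}\geq 2K^{-c_1\delta e^{-c_2/\delta}}$ automatically satisfies the hypothesis $\frac{M}{N}\geq 2K^{-1/(2C^4\ln C)}$ of the theorem. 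Invoking the theorem then delivers the claimed $(R=K^{\delta},K,M,N,F=K)$ scheme.

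The step I expect to require the most care is this last direction-of-inequality bookkeeping together with the rounding in the choice of $C$: because the corollary asserts that the scheme works for \emph{all} $\frac{M}{N}$ above the stated threshold, I must lower bound the true cache exponent $\frac{1}{2C^4\ln C}$ rather than merely estimate its order, and the rounding of $10.5^{4/\delta}$ up to an integer must be absorbed into $c_1$ (shrinking it if necessary). A secondary point to verify is that admissible $K=C^n$ with $n$ even, $n\geq 2C$, and $K\geq K(\delta)$ actually exist, which is immediate since $C^n\to\infty$ along even $n$.
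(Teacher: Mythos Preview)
Your proposal is correct and follows essentially the same approach as the paper: solve for $C$ in terms of $\delta$ so that the transmission exponent $2\frac{\ln 10.5}{\ln C}$ matches the target, then substitute that $C$ into $\frac{1}{2C^4\ln C}$ to obtain the form $c_1\delta\,e^{-c_2/\delta}$. The paper simply sets $\delta=2\frac{\ln 10.5}{\ln C}$ exactly and re-parameterizes, whereas you build in a factor-of-two slack ($C\approx 10.5^{4/\delta}$ so the leading exponent is $\leq\delta/2$) to absorb the $o(1)$ term, and you explicitly track integrality of $C$ and the direction of the inequality on the cache exponent; these extra details are refinements the paper glosses over, not a different method.
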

   \begin{proof}
    We set $\delta= 2 \frac{\mathrm{ln} 10.5}{\mathrm{ln} C}$. Then, $\frac{1}{2C^4 \mathrm{ln} C} =c_1\delta \exp \left( - \frac{c_2}{\delta} \right)$ for some universal positive constants $c_1,c_2$. For large enough $C$, $\delta$ can be made to lie in $(0,1)$ and correspondingly $K$ increases as function of $\delta$ since $K=C^n$ by construction. Substituting the re-parameterized values proves the result.
   \end{proof}
   \textbf{Remark:} This shows that, for sub-constant $\frac{M}{N}$ and large enough $K$, there are coded caching schemes with file size $F$ linear in $K$ with almost constant number of transmissions.
   
  \section{Conclusion}
     We presented a connection between Ruzsa-Szem\'eredi Graphs studied in extremal combinatorics and the coded caching problem. We showed that a dense Ruzsa-Szem\'eredi construction gives coded caching schemes with file size linear in the number of users. We leveraged an existing graph construction to show that coded caching schemes with file size linear in the number of users, sub-constant user cache to library size ratio and almost constant transmission rate for any set of demands is possible when the number of users is large. This settles an outstanding question of existence of schemes with linear file size and very small rate in this line of work. The construction used in this work requires the number of users $K$ to be very large which may be impractical. It would be of great interest to construct Ruzsa-Szem\'eredi graphs (more general bipartite versions) in the dense regime where the number of vertices is not very large. 
 \bibliographystyle{plain}
\bibliography{refruzsa}
\end{document}